\theoremstyle{definition}
\newtheorem{theorem}{Theorem}[section]
\newcommand{\image}[3]{\begin{figure*}[ht]
\includegraphics[width=#2\textwidth]{#1}
\caption{\small{\label{#1}#3}}\end{figure*}}
\def\({\left(}
\def\){\right)}
\newcommand{\R}{\mathbb{R}}
\newcommand{\de}{\textnormal{d}}
\newcommand{\ds}{\displaystyle}
\newcommand{\eg}{\textit{e.g.} }
\newcommand{\Ric}{\textnormal{Ric}}
\newcommand{\mc}[1]{\mathcal{#1}}
\newcommand{\sref}[1]{\S\ref{#1}}
\newcommand{\dsfrac}[2]{\ds{\frac{#1}{#2}}}
\def\hyph{-\penalty0\hskip0pt\relax}
\newcommand{\nonsing}{non{\hyph}singular}
\newcommand{\flrw}{Friedmann-Lema\^itre-Robertson-Walker}
\newcommand{\FLRW}{FLRW}
\begin{document} 
 
\title[{Beyond the FLRW Big Bang singularity}]{Beyond the Friedmann-Lema\^itre-Robertson-Walker Big Bang singularity}

\author{Cristi \ Stoica}
\thanks{email: holotronix@gmail.com}
\date{Feb. 5, 2012 -- Jul. 27, 2012}

\begin{abstract}
Einstein's equation, in its standard form, breaks down at the Big Bang singularity. A new version, equivalent to Einstein's whenever the latter is defined, but applicable in wider situations, is proposed. The new equation remains smooth at the Big Bang singularity of the Friedmann-Lema\^itre-Robertson-Walker model. It is a tensor equation defined in terms of the Ricci part of the Riemann curvature. It is obtained by taking the Kulkarni-Nomizu product between Einstein's equation and the metric tensor.
\bigskip
\noindent 
\keywords{singular General Relativity,singular semi-Riemannian manifolds,singular semi-Riemannian geometry,degenerate manifolds,quasi-regular semi-Riemannian manifolds,quasi-regular semi-Riemannian geometry}
\end{abstract}



\maketitle

\setcounter{tocdepth}{1}
\tableofcontents

\section*{Introduction}

The {\flrw} (\FLRW) spacetime represents, in conformity with General Relativity, the universe at very large scale, where it is assumed to be homogeneous and isotropic \cite{FRI22de,FRI99en,FRI24,LEM27,ROB35I,ROB35II,ROB35III,WAL37}. At the Big Bang, it exhibits a singularity, considered a breakdown of General Relativity \cite{HP70,Haw76,ASH91,HP96,Ash08,Ash09}. There the tensors involved in the Einstein equations diverge. It was hoped that this problem is a consequence of the high symmetry of the {\FLRW} solution, and it disappears in real-life situations, but it turned out that is the norm, rather than the exception, as Hawking's singularity theorem shows \cite{Haw66i,Haw66ii,Haw67iii}.

In this paper we will see that, although Einstein's equation in its standard form breaks down at the Big Bang singularity in the {\FLRW} model, it can be rewritten in a form which doesn't break down:
\begin{equation*}
	(G\circ g)_{abcd} + \Lambda (g\circ g)_{abcd} = \kappa (T\circ g)_{abcd}.
\end{equation*}
This form is equivalent to the standard form where the metric is not singular, but in addition extends smoothly at the singularities. Like Einstein's equation, this new equation is tensorial, but instead of being expressed in terms of the \textit{Ricci tensor} (of second order), it involves \textit{the Ricci part of the Riemann curvature tensor} (of order $4$). It is obtained by taking the Kulkarni-Nomizu product \eqref{eq_kulkarni_nomizu} between Einstein's equation and the metric tensor.

The proposed new version of Einstein's equation extends naturally and smoothly beyond the Big Bang singularity of the {\FLRW} model, and it is based solely on standard General Relativity, without any modifications.

In section \sref{s_flrw} we review briefly the fact that in the {\FLRW} model, Einstein's equation blows up. In \sref{s_einstein_eq_expanded} we introduce the expanded version of Einstein's equation, which is equivalent to the latter at the points where the metric is {\nonsing}, but it is more general. Section \sref{s_beyond_big_bang} contains the central result, a theorem showing that the new equation is smooth everywhere, including at the Big Bang singularity. Section \sref{s_properties} discusses some properties of the proposed equation and solution. We conclude with some observations and implications in \sref{s_conclusions}.

\section{{\FLRW} Big-Bang singularity}
\label{s_flrw}

Let $I\subseteq \R$ be an interval representing the time, with the natural metric $-c^2\de t^2$. Let $(\Sigma,g_\Sigma)$ be a three-dimensional Riemannian space, so that at any moment of time $t\in I$ the space is $\Sigma_t=(\Sigma,a^2(t)g_\Sigma)$, where $a: I\to \R$ is a function of time. The {\FLRW} spacetime is $I\times\Sigma$, with the metric
\begin{equation}
\label{eq_flrw_metric}
\de s^2 = -c^2\de t^2 + a^2(t)\de\Sigma^2.
\end{equation}
To model the homogeneity and isotropy conditions at large scale, one may take the Riemannian three-manifold $\Sigma$ to be one of the homogeneous spaces $S^3$, $\R^3$, and $H^3$. Then the metric on $\Sigma$ is, in spherical coordinates $(r,\theta,\phi)$,
\begin{equation}
\label{eq_flrw_sigma_metric}
\de\Sigma^2 = \dsfrac{\de r^2}{1-k r^2} + r^2\(\de\theta^2 + \sin^2\theta\de\phi^2\),
\end{equation}
where $k=1,0,-1$, for the $3$-sphere $S^3$, the Euclidean space $\R^3$, or hyperbolic space $H^3$ respectively.

If the universe is filled with a fluid with mass density $\rho(t)$ and pressure density $p(t)$, the stress-energy tensor is
\begin{equation}
\label{eq_friedmann_stress_energy}
T^{ab} = \(\rho + \dsfrac{p}{c^2}\)u^a u^b + p g^{ab},
\end{equation}
where $g(u,u)=-c^2$.

The mass density $\rho(t)$ is determined from $a(t)$ by the \textit{Friedmann equation}
\begin{equation}
\label{eq_friedmann_density}
\rho = \kappa^{-1}\(3\dsfrac{\dot{a}^2 + kc^2}{c^2 a^2} - \Lambda \),
\end{equation}
and pressure density $p(t)$ by the \textit{acceleration equation}
\begin{equation}
\label{eq_acceleration}
\dsfrac{p}{c^2} = \dsfrac{2}{\kappa c^2}\(\dsfrac{\Lambda}{3}-\dsfrac{1}{c^2} \dsfrac{\ddot{a}}{a}\) - \dsfrac \rho 3.
\end{equation}
Both these equations are consequence of the Einstein equation with the stress-energy tensor from equation \eqref{eq_friedmann_stress_energy}.

When $a\to 0$, the metric becomes degenerate, as we can see from equation \eqref{eq_flrw_sigma_metric}. In the same time, equations \eqref{eq_friedmann_density} and \eqref{eq_acceleration} imply that both $\rho$ and $p$ blow up. Consequently, the stress-energy tensor \eqref{eq_friedmann_stress_energy} blows up too. But the expanded stress-energy tensor $(T \circ g)_{abcd}$ is smooth, as it is the expanded Einstein equation which we propose here.

\section{Einstein's equation expanded}
\label{s_einstein_eq_expanded}

The Einstein equation
\begin{equation}
\label{eq_einstein}
	G_{ab} + \Lambda g_{ab} = \kappa T_{ab}
\end{equation}
involves the stress-energy tensor $T_{ab}$ of the matter, the \textit{cosmological constant} $\Lambda$, and the constant $\kappa:=\dsfrac{8\pi \mc G}{c^4}$, where $\mc G$ and $c$ are the gravitational constant and the speed of light. 
The Einstein tensor
\begin{equation}
\label{eq_einstein_tensor}
	G_{ab}:=R_{ab}-\frac 1 2 R g_{ab},
\end{equation}
is obtained from the \textit{Ricci curvature} $R_{ab} := g^{st}R_{asbt}$ and the \textit{scalar curvature} $R := g^{st}R_{st}$.

The \textit{expanded Einstein equation} is
\begin{equation}
\label{eq_einstein_expanded}
	(G\circ g)_{abcd} + \Lambda (g\circ g)_{abcd} = \kappa (T\circ g)_{abcd}
\end{equation}
where, for two symmetric bilinear forms $h$ and $k$,
\begin{equation}
\label{eq_kulkarni_nomizu}
	(h\circ k)_{abcd} := h_{ac}k_{bd} - h_{ad}k_{bc} + h_{bd}k_{ac} - h_{bc}k_{ad}
\end{equation}
denotes the \textit{Kulkarni-Nomizu product}.

So long as the metric $g$ is {\nonsing}, the Einstein equation and its expanded version are equivalent. But the expanded version \eqref{eq_einstein_expanded} remains {\nonsing}, and even smooth, in a wider rage of cases. In the example of the {\FLRW} spacetime, the metric becomes degenerate (its determinant cancels), the Einstein tensor $G_{ab}$ becomes singular, but we will see that the Kulkarni-Nomizu product $G\circ g$ tends to $0$ and cancels the blow up of the Einstein tensor.

Explicitly, the expanded Einstein equation \eqref{eq_einstein_expanded} can be rewritten as
\begin{equation}
\label{eq_einstein_expanded_explicit}
	2 E_{abcd} - 3 S_{abcd} + \Lambda (g\circ g)_{abcd} = \kappa (T\circ g)_{abcd},
\end{equation}
in terms of the \textit{scalar part}
\begin{equation}
	S_{abcd} = \dsfrac{1}{12}R(g\circ g)_{abcd}
\end{equation}
and the \textit{semi-traceless part}
\begin{equation}
	E_{abcd} = \dsfrac{1}{2}(S \circ g)_{abcd},
\end{equation}
of the Riemann curvature, where
\begin{equation}
\label{eq_ricci_traceless}
S_{ab} := R_{ab} - \dsfrac{1}{4}Rg_{ab}
\end{equation}
is the traceless part of the Ricci curvature.

These tensors are well-known from the Ricci decomposition of the Riemann curvature tensor:
\begin{equation}
	R_{abcd} = S_{abcd} + E_{abcd} + C_{abcd},
\end{equation}
where $C_{abcd}$ is the \textit{Weyl curvature tensor} (see \eg \cite{ST69,BESS87,GHLF04}).

The equation \eqref{eq_einstein_expanded_explicit} is obtained from \eqref{eq_einstein_expanded} and from
\begin{equation}
G_{ab} = S_{ab} - \dsfrac{1}{4}R g_{ab},
\end{equation}
because
\begin{equation}
\label{eq_einstein_tensor_expanded}
\begin{array}{lrl}
(G\circ g)_{abcd} &=& (S \circ g)_{abcd} - \dsfrac{1}{4}R (g\circ g)_{abcd}\\
&=& 2 E_{abcd} - 3 S_{abcd}.
\end{array}
\end{equation}

\section{Beyond the {\FLRW} Big-Bang singularity}
\label{s_beyond_big_bang}

\begin{theorem}
For the {\FLRW} metric \eqref{eq_flrw_metric},
 with $a: I\to \R$ a smooth function of time, the tensors $R_{abcd}$, $S_{abcd}$, and $E_{abcd}$ are smooth, and consequently the expanded Einstein equation is smooth too, even when $a(t)=0$.
\end{theorem}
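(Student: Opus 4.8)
The plan is to compute the fully covariant curvature tensor $R_{abcd}$ directly in the coordinates $(t,r,\theta,\phi)$, exploiting the warped-product structure $-c^2\de t^2 + a^2(t)g_\Sigma$, and then to read off $S_{abcd}$ and $E_{abcd}$ from it. The metric components $g_{00}=-c^2$ and $g_{ij}=a^2(t)(g_\Sigma)_{ij}$ are manifestly smooth in all the coordinates, so the only possible source of a blow-up is the inverse metric $g^{ij}=a^{-2}(g_\Sigma)^{ij}$, which enters through the Christoffel symbols. First I would record these: the purely spatial symbols coincide with those of $g_\Sigma$ and are $a$-independent, $\Gamma^0_{ij}=\frac{a\dot a}{c^2}(g_\Sigma)_{ij}$ is smooth, while the only singular one is $\Gamma^i_{0j}=\frac{\dot a}{a}\delta^i_j$. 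The strategy is to show that this singular factor is always cancelled once the curvature is written with all indices down.

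Carrying out the standard curvature computation, I expect to find that the components with an odd number of time indices (such as $R_{0ijk}$) vanish identically — this follows from metric compatibility of $g_\Sigma$ together with the block-diagonal form of $g$ — while the surviving components are
\begin{equation*}
R_{0i0j} = -a\ddot a\,(g_\Sigma)_{ij}, \qquad R_{ijkl} = a^2 R^\Sigma_{ijkl} + \frac{a^2\dot a^2}{c^2}\big((g_\Sigma)_{ik}(g_\Sigma)_{jl} - (g_\Sigma)_{il}(g_\Sigma)_{jk}\big),
\end{equation*}
together with the permutations forced by the symmetries of $R_{abcd}$. Here $R^\Sigma_{ijkl}$ is the intrinsic curvature of $g_\Sigma$, which is smooth and independent of $a$. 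Each of these expressions is a product of smooth functions (note that the singular $\Gamma^i_{0j}$ enters only through products such as $\Gamma^0_{jk}\Gamma^k_{i0}\sim (a\dot a)(a^{-1}\dot a)=\dot a^2$, in which the $a^{-1}$ is cancelled by the $a$ coming from $\Gamma^0_{jk}$), so $R_{abcd}$ is smooth even at $a(t)=0$.

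The hard part will be $S_{abcd}$ and $E_{abcd}$, because these are built from the Ricci tensor $R_{ab}=g^{st}R_{asbt}$ and the scalar curvature $R=g^{st}R_{st}$, which genuinely blow up: contracting with the inverse metric gives $R_{00}\sim a^{-1}$, $R\sim a^{-2}$, and hence $S_{00}=R_{00}-\tfrac14 R g_{00}\sim a^{-2}$, whereas the spatial blocks $R_{ij}$ and $S_{ij}$ remain smooth. The point I would make is that in the Kulkarni-Nomizu products these singular components are always multiplied by degenerate metric factors that vanish like $a^2$. Concretely, since every nonzero component of $(g\circ g)_{abcd}$ carries at least one spatial pair, it scales like $a^2$ or $a^4$, and $R\,a^2 = \frac{6a\ddot a}{c^2} + R^\Sigma + \frac{6\dot a^2}{c^2}$ is a smooth function of $t$; therefore $S_{abcd}=\frac{1}{12}R(g\circ g)_{abcd}$ is smooth. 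Likewise, in $(S\circ g)_{abcd}$ the singular entry $S_{00}$ only ever pairs with a spatial factor $g_{bd}=a^2(g_\Sigma)_{bd}$, and $S_{00}\,a^2$ is again smooth, so $E_{abcd}=\frac12(S\circ g)_{abcd}$ is smooth as well.

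Once $R_{abcd}$, $S_{abcd}$ and $E_{abcd}$ are all smooth, smoothness of the expanded Einstein equation \eqref{eq_einstein_expanded_explicit} is immediate, since its left-hand side is the smooth combination $2E_{abcd}-3S_{abcd}+\Lambda(g\circ g)_{abcd}$. The essential obstacle throughout is bookkeeping: one must verify that every occurrence of a singular quantity ($\Gamma^i_{0j}$, $R_{00}$, $R$, $S_{00}$) is matched by exactly enough powers of $a$ from the covariant metric factors, and that the surviving products are not merely bounded but honestly smooth in $t$ — which is why I would track the explicit $t$-dependence of the compensated quantities (e.g. $R\,a^2$ and $S_{00}\,a^2$) rather than stopping at an order-of-magnitude estimate.
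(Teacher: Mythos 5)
Your proof is correct, but it takes a genuinely different route from the paper's. The paper never computes the curvature from the metric: it starts from the trace-reversed Einstein equation, substitutes the perfect-fluid stress-energy tensor, and uses the Friedmann and acceleration equations to conclude that $R_{ab}=a^{-2}(t)\alpha(t) u_au_b+a^{-2}(t)\beta(t) g_{ab}$ and $R=a^{-2}(t)\gamma(t)$ with $\alpha,\beta,\gamma$ smooth; the cancellation argument in the Kulkarni--Nomizu products is then essentially the same as yours (every surviving component of $g\circ g$, or of the product of a singular factor with $g$, carries at least one spatial factor $g_{aa}\sim a^2$). Smoothness of $R_{abcd}$ is obtained there only at the very end, as a corollary of conformal flatness ($C_{abcd}=0$) and $R_{abcd}=S_{abcd}+E_{abcd}$. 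You instead compute $R_{abcd}$ directly from the warped-product structure and get its smoothness first, then contract to exhibit the singular $R_{00}$, $R$, $S_{00}$ and verify the same compensation. Your route buys independence from the field equations and the matter model --- it is a statement about the metric alone, needs neither the Friedmann equations nor conformal flatness, and your explicit formula $Ra^2=\frac{6a\ddot a}{c^2}+R^{\Sigma}+\frac{6\dot a^2}{c^2}$ even identifies the limiting values at $a=0$; the paper's route is shorter once the Friedmann equations are granted and keeps the physical source of the divergence ($\rho, p\sim a^{-2}$) in view. One small imprecision: your parenthetical attributes the cancellation inside $R_{abcd}$ to Christoffel products like $\Gamma^0_{jk}\Gamma^k_{i0}$, but for the components $R_{0i0j}$ the $a^{-1}$ in $R^i{}_{0j0}=\frac{\ddot a}{a}\delta^i_j$ is actually absorbed by the $a^2$ coming from lowering the index with $g_{ik}$, not by such a product; this is worth stating explicitly, though it does not affect the conclusion.
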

\begin{proof}
If we denote by $\widetilde T_{ab}:= \kappa T_{ab} - \Lambda g_{ab}$,
\begin{equation}
\label{eq_flrw_curv_ricci}
\begin{array}{lll}
R_{ab} &=& \widetilde T_{ab} - \dsfrac{1}{2}g^{st}\widetilde T_{st} \\
&=& \kappa T_{ab} - \Lambda g_{ab} - \dsfrac{\kappa}{2} g^{st} T_{st} g_{ab} + 2 \Lambda g_{ab} \\
&=& \kappa\(\rho + \dsfrac{p}{c^2}\) u_a u_b + \kappa p g_{ab} - \dsfrac{\kappa}{2} \(-\rho c^2 - p + 4 p\) g_{ab} + \Lambda g_{ab} \\
&=& \kappa\(\rho + \dsfrac{p}{c^2}\) u_a u_b + \dsfrac{\kappa}{2} \(\rho c^2 - p\) g_{ab} + \Lambda g_{ab} \\
\end{array}
\end{equation}
From equations \eqref{eq_friedmann_density}, \eqref{eq_acceleration}, and \eqref{eq_flrw_curv_ricci}, we can see that the Ricci tensor has the form
\begin{equation}
\label{eq_flrw_curv_ricci_a}
	R_{ab} = a^{-2}(t)\alpha(t) u_a u_b + a^{-2}(t)\beta(t) g_{ab}.
\end{equation}
where $\alpha(t)$ and $\beta(t)$ are smooth functions.
Similarly,
\begin{equation}
\label{eq_flrw_curv_scalar}
\begin{array}{lll}
R &=& g^{st}R_{st} \\
&=& \kappa\(-\rho c^2 - p + 2 \rho c^2 - 2 p\) + 4\Lambda \\
&=& \kappa\(\rho c^2 - 3p\) + 4\Lambda \\
\end{array}
\end{equation}
and there is a smooth function $\gamma(t)$ so that
\begin{equation}
\label{eq_flrw_curv_scalar_a}
R = a^{-2}(t)\gamma(t).
\end{equation}

We need to check that $a^{-2}(t)$ is compensated in $S_{abcd}$ and $E_{abcd}$, so that $a(t)$ appears to a non-negative power.

Since the {\FLRW} metric \eqref{eq_flrw_metric} is diagonal in the standard coordinates, each term in $(g\circ g)_{abcd}$ is of the form $g_{aa}g_{bb}$, with $a\neq b$. This means that at least $a\neq t$ or $b\neq t$ holds, and from \eqref{eq_flrw_metric} we conclude that $g_{aa}g_{bb}$ contains $a(t)$ at least to the power $2$. Therefore, the scalar part of the Riemann curvature, $S_{abcd}$, is smooth.

For the same reason, the Kulkarni-Nomizu product between the metric tensor and the term $a^{-2}(t)\beta(t) g_{ab}$ from the expression of the Ricci curvature \eqref{eq_flrw_curv_ricci_a} is smooth.

The only term from \eqref{eq_flrw_curv_ricci_a} we have to check that is smoothened by the Kulkarni-Nomizu product with $g$ is $a^{-2}(t)u_a u_b$. Since $u_a=g_{as}u^s$ and $g$ is diagonal, it follows that $u_a=g_{aa}u^a$ (without summation). If $b\neq t$ ($a\neq t$  is similar), then $u_b=g_{bb}u^b$ contains the needed $a^2(t)$. In the case when $a=b=t$, $a^{-2}(t) u_t u_t$ is not necessarily smooth, but in the Kulkarni-Nomizu product it will appear only in terms of the form $a^{-2}(t) u_t u_t g_{cc}$, with $c\neq t$. Hence, $\Ric\circ g$ is smooth. From this and from the smoothness of $S_{abcd}$, it follows that $E_{abcd}$ is also smooth.

One of the properties of the {\FLRW} metric is that it is conformally flat, that is, $C_{abcd} = 0$. From this it follows that $R_{abcd}=S_{abcd}+E_{abcd}$ is smooth too.
\end{proof}

\section{Properties of the proposed equation}
\label{s_properties}

\subsection{Conservation of energy}

The conservation of energy is usually put in the form
\begin{equation}
	-a^3\dot\rho = 3 a^2 \dot a \rho + \dsfrac 3 {c^2} a^2\dot a p,
\end{equation}
which remains valid even when the volume $a^3\to 0$.

\subsection{The metric is parallel}

It is known that if the metric tensor is regular, its covariant derivative vanishes, $g_{ab;c}=0$. For our solution, this is true so long as $a(t)\neq 0$. But if $a=0$, the metric is degenerate, and we have to check that $g_{ab;c}=0$.

The metric being diagonal, its Christoffel symbols of the first kind,
\begin{equation}
	\Gamma_{abc}=\dsfrac 1 2 \(g_{bc,a} + g_{ca,b} - g_{ab,c}\)
\end{equation}
which don't vanish are either of the form
\begin{equation}
	\Gamma_{aaa}=\dsfrac 1 2 g_{aa,a}
\end{equation}
or, for $a\neq b$,
\begin{equation}
	\Gamma_{aab} = -\dsfrac 1 2 g_{aa,b}
\end{equation}
or
\begin{equation}
	\Gamma_{aba} = \Gamma_{baa} = \dsfrac 1 2 g_{aa,b}
\end{equation}
Consequently, the Christoffel symbols of the second kind,
\begin{equation}
	\Gamma^c_{ab}=g^{cs}\dsfrac 1 2 \(g_{bs,a} + g_{sa,b} - g_{ab,s}\)
\end{equation}
are of the form
\begin{equation}
	\Gamma^a_{aa}=\dsfrac 1 2 \dsfrac {g_{aa,a}}{g_{aa}} (!)
\end{equation}
or, for $a\neq b$,
\begin{equation}
	\Gamma^b_{aa} = -\dsfrac 1 2 \dsfrac {g_{aa,b}}{g_{bb}} (!)
\end{equation}
or
\begin{equation}
	\Gamma^a_{ab} = \Gamma^a_{ba} = \dsfrac 1 2 \dsfrac {g_{aa,b}}{g_{aa}} (!)
\end{equation}
 where $(!)$ means ``no summation over the repeated indices''.

It follows that the covariant derivative introduces in the worst case a division by $a^2(t)$.

The covariant derivative of the metric tensor is
\begin{equation}
	g_{ab;c} = g_{ab,c} - \Gamma^s_{bc}g_{as} - \Gamma^s_{ac}g_{sb}.
\end{equation}
Obviously $g_{ab,c}$ is smooth, because $g_{ab}$ is smooth. From the other terms, the only ones involving non-vanishing Christoffel symbols are of the form $\Gamma^a_{aa}g_{aa}$, $\Gamma^a_{bb}g_{aa}$, and $\Gamma^a_{ab}g_{aa}$, without summation. Whenever $\Gamma^a_{bc}$ involves $a(t)$ to a negative power, which can only be $1$ or $2$, this is compensated by $g_{aa}$, which contains $a^2(t)$. It follows that the covariant derivative of the metric tensor is smooth, and by continuity is zero even when the metric becomes degenerate (at $a(t)=0$):
\begin{equation}
\label{eq_parallel_metric}
	g_{ab;c} = 0.
\end{equation}

\subsection{The Bianchi identity}

We will show that the Riemann curvature tensor satisfies the Bianchi identity
\begin{equation}
	R_{(abc)d;e} = 0.
\end{equation}
Given that it holds at all the points for which $a(t)\neq 0$, where the metric is regular, it also holds by continuity at $a(t)= 0$. But we need to check that the covariant derivatives $(\Ric\circ g)_{abcd;e}$ are smooth, because if the Bianchi identity would be between infinte values, there would be no continuity.

Since the Weyl part of the Riemann curvature $C_{abcd}=0$ in the {\FLRW} spacetime, and from the equations \eqref{eq_flrw_curv_ricci_a} and \eqref{eq_flrw_curv_scalar_a}, it follows that $R_{abcd}$ has the following form:
\begin{equation}
	R_{abcd} = a^{-2}(t)\mu(t) \((u\otimes u)\circ g\)_{abcd} + a^{-2}(t)\nu(t) (g\circ g)_{abcd},
\end{equation}
where the functions $\mu(t)$ and $\nu(t)$ are smooth.

Let's denote by $h_{ij}$, $1\leq i,j\leq 3$, the metric on $\Sigma$. Then $g_{ij}=a^2(t)h_{ij}$. Given that our frame is comoving with the fluid, $u_t=1$, and $u_i=0$ for all $i$. The only terms $a^{-2}(t)\mu(t) \((u\otimes u)\circ g\)_{abcd}$ which don't vanish by containing $u_i=0$ are of the form $a^{-2}(t)\mu(t)u_tu_tg_{ii}$. The covariant derivatives with respect to $t$ cancel one another in the Bianchi identity under the permutation, because the index $t$ is repeated. So we check now those terms of the form $\nabla_j\(a^{-2}(t)\mu(t)u_tu_tg_{ii}\)$, where $i\neq j$. But $\nabla_j\(a^{-2}(t)\mu(t)u_tu_tg_{ii}\)=a^{-2}(t)\mu(t)u_tu_tg_{ii;j}=0$, because only $g_{ii}$ depends on the spacelike direction $x^j$, and because the metric tensor is parallel \eqref{eq_parallel_metric}.

The terms $a^{-2}(t)\nu(t) (g\circ g)_{abcd}$ can only be of the form $a^{-2}(t)\nu(t) g_{aa}g_{bb}$, $a\neq b$. Since $\nabla_i\(a^{-2}(t)\nu(t) g_{aa}g_{bb}\) = a^{-2}(t)\nu(t) \(g_{aa;i}g_{bb} + g_{aa}g_{bb;i}\) = 0$, it follows that the covariant derivatives with respect to $i$ vanish. We check now thosw with respect to $t$. If either the index $a$ or $b$ is equal to $t$, then the cyclic permutation involved in the Bianchi identity vanishes. Then the only remaining possibility is $\nabla_t\(a^{-2}(t)\nu(t) g_{ii}g_{jj}\)$. But $\nabla_t\(a^{-2}(t)\nu(t) g_{ii}g_{jj}\) = -2\dot a(t)a^{-3}(t)\nu(t) g_{ii}g_{jj} + a^{-2}(t)\dot\nu(t) g_{ii}g_{jj}= -2\dot a(t)a(t)\nu(t) h_{ii}h_{jj} + a^{2}(t)\dot\nu(t) h_{ii}h_{jj}$, which is smooth.

Hence, the Bianchi identity makes sense even at the singularity $a(t)=0$.

\subsection{Action principle}

Shortly after Einstein proposed his field equation, Hilbert and Einstein provided a Lagrangian formulation. The Lagrangian density which leads to Einstein's equation with matter given by $\mc L\sqrt{-g}$ and cosmological constant $\Lambda$ is
\begin{equation}
\label{eq_lagrangian}
	\dsfrac{1}{2\kappa}\(R\sqrt{-g} - 2\Lambda\sqrt{-g}\) + \mc L\sqrt{-g}.
\end{equation}
In our case, the scalar curvature is singular at $a(t)\to 0$. But this doesn't affect the Lagrangian density, since the density $R\sqrt{-g}$ is smooth \cite{Sto11h}.
Given that our expanded Einstein equation \eqref{eq_einstein_expanded} is equivalent to Einstein's its solutions are extremals of the action given by \eqref{eq_lagrangian}.

\section{Conclusions}
\label{s_conclusions}

The new form of Einstein's equation extends uniquely  beyond the Big Bang singularity, as it is represented schematically in Figure \ref{flrw-exp}.

\image{flrw-exp}{0.8}{Schematic representation of a generic {\FLRW} spacetime. The solutions of the new equation can be continued naturally before the Big Bang.}

An alternative solution was proposed in \cite{Sto11h}, where the Einstein equation was replaced with a densitized version
\begin{equation}
\label{eq_einstein_idx:densitized}
	G_{ab}\sqrt{-g} + \Lambda g_{ab}\sqrt{-g} = \kappa T_{ab}\sqrt{-g}.
\end{equation}
This version is expressed in terms of tensor densities of weight $1$, which appear naturally from the Lagrangean.

The {\FLRW} spacetime is an ideal one, based on the assumptions of the \textit{cosmological principle} (that it is homogeneous and isotropic). But the extension proposed here opens new possibilities to explore.

Singularities which are of the type studied here, having the Riemann curvature tensor $R_{abcd}$ smooth, and admitting smooth Ricci decomposition, are in fact more general. In \cite{Sto11e} the Schwarzschild singularity is put in a form in which has these properities, by an appropriate coordinate change. This, and similar results on the Reissner-Nordstr\"om singularity \cite{Sto11f} suggests that we should reconsider the information loss \cite{Sto12e}. More general cosmological models, which are neither homogeneous nor isotropic, are studied in \cite{Sto12c}, and shown to admit a smooth Ricci decomposition, and satisfy the Weyl curvature hypothesis \cite{Pen79}. Implications suggesting to reconsider the the problem of quantization are presented in \cite{Sto12d,Sto12f}.

\textbf{Acknowledgments}

I thank an anonymous referee for the valuable comments and suggestions to improve the clarity and the quality of this paper.

\bibliographystyle{unsrt}

\end{document}